\newtheorem{thm}{Theorem}[section]
\newtheorem{obs}[thm]{Observation}
\newtheorem{question}[thm]{Question}
\theoremstyle{definition}
\newtheorem{definition}[thm]{Definition}
\definecolor{fondo}{rgb}{0.898,0.996,0.898}
\definecolor{diagonal}{rgb}{0.466,0,0}
\title{Interval Posets and Polygon Dissections}
\author{Eli Bagno, Estrella Eisenberg, Shulamit Reches and Moriah Sigron
\institute{Jerusalem College of Technology, 21 HaVaad HaLeumi St., Jerusalem, Israel}
\email{bagnoe@g.jct.ac.il}}
\begin{document}

\maketitle

\begin{abstract}
The Interval poset of a permutation is an effective way of capturing all the intervals of the permutation and the inclusions between them and was introduced recently by Tenner. 
Thi paper explores the geometric interpretation of interval posets of permutations. We present a bijection between tree interval posets and convex polygons with non-crossing diagonals, offering a novel geometric perspective on this purely combinatorial concept. 
Additionally, we provide an enumeration of interval posets using this bijection and demonstrate its application to block-wise simple permutations. 
\end{abstract}

\section{Introduction}

In \cite{T}, Tenner  defined the concept of an interval poset of a permutation. 
 This is an effective way of capturing all the intervals of a permutation and the set of inclusions between them in one glance. 
Tenner dealt with structural aspects of the interval poset and characterized the posets $P$ that can be seen as interval posets of some permutations.  
 
 An interval poset might correspond to more than one permutation. For instance, all simple permutations of a given order $n$ share the same interval poset. 
Tener, in the aforementioned paper, enumerated binary interval posets and binary tree interval posets but left open the following question:

\begin{question}
How many tree interval posets have $n$ minimal elements?
\end{question}

This question was answered by Bouvel, Cioni and Izart in \cite{BCL}. They also noted that the number of tree interval posets is equal to the number of ways to place non-crossing diagonals in a convex $(n+1)$-gon such that no quadrilaterals are created. 

In this work we suggest a simple bijection between the set of tree interval posets and the set of $(n+2)$-gons, satisfying the conditions listed above. We use this bijection also for enumerating the whole set of interval posets by using a broader set of polygons. In \cite{BCL}, the enumeration of the entire set of interval posets was done in an algebraic way, using generating functions, while our bijection grants a geometric view to the interval posets. 

Another set of interval posets that can be enumerated by polygons is the one corresponding to block-wise simple permutations, a term that was introduced in a recent paper by the current authors \cite{BERS}.

\section{Background}
\begin{definition}
Let $\mathcal{S}_n$ the symmetric group on $n$ elements. Let $\pi=a_1 \cdots a_n \in \mathcal{S}_n$.   
An {\em interval} (or {\em block}) of $\pi$ is a non-empty contiguous
sequence of entries $a_i a_{i+1} \cdots a_{i+k}$ whose values also form a contiguous sequence of integers.
For $a<b$, $[a,b]$ denotes the interval of values that range from $a$ to $b$. 
Clearly, $[n]:=[1,n]$ is an interval, as well as $\{i\}$ for each $i \in [n]$. These are called trivial intervals. The other intervals are called {\it proper}.
\end{definition}
For example, the permutation $\pi=314297856$ has $[5,9]=97856$ as a proper interval as well as the following proper intervals: $[1,4],[5,6],[7,8],[7,9]$, $[5,8]$.


A permutation $\pi \in \mathcal{S}_n$ is called {\em simple} if it does not have proper intervals. 
For example, the permutation $3517246$ is simple.

Following Tenner \cite{T}, we define an {\it interval poset}
for each permutation as follows:

\begin{definition} \label{def interval poset}
The interval poset of a permutation $\pi \in \mathcal{S}_n$ is the poset $P(\pi)$ whose elements are the non-empty intervals of $\pi$; the order is defined by set inclusion (see for example Figures \ref{fig:IntervalPoset5123647} and \ref{fig:IntervalPoset3142}). 
The minimal elements are the intervals of size $1$.  
\end{definition}

In \cite{T}, the interval poset is embedded in the plane so that each node's direct descendants are increasingly ordered according to the minimum of each interval from left to right. 
We note that in \cite{BCL} another embedding of the same poset was presented.

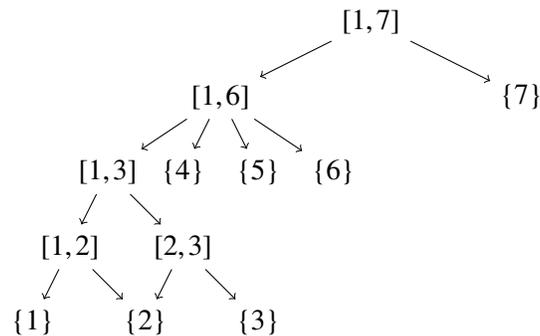
\begin{figure}[!ht]
     \centering

 \begin{tikzpicture}
       \tikzstyle{every node} = [rectangle]
      
         \node (17) at (0,0) {$[1,7]$};
            \node (16) at (-2,-1) {$[1,6]$};
            \node (7) at (2,-1) {$\{7\}$};
            \node (13) at (-3.5,-2) {$[1,3]$};
            \node (12) at (-4,-3) {$[1,2]$};
            \node (23) at (-2.5,-3) {$[2,3]$};
            \node (4) at (-2.5,-2) {$\{4\}$};
            \node (5) at (-1.5,-2) {$\{5\}$};
            \node (6) at (-0.5,-2) {$\{6\}$};
            \node (1) at (-4.5,-4) {$\{1\}$};
            \node (2) at (-3,-4) {$\{2\}$};
            \node (3) at (-1.5,-4) {$\{3\}$};
            
        \foreach \from/\to in {17/16,17/7,16/13,16/4,16/5,16/6,13/12,12/1,23/2,23/3,13/23,12/2}
            \draw[->] (\from) -- (\to);
    \end{tikzpicture}
 
\caption{Interval poset of the permutations: 5123647, 5321647, 4612357, 4632157, 7463215, 7461235, 7532164, 7512364 }
   \label{fig:IntervalPoset5123647}
 \end{figure}

If $\pi$ is a simple permutation, the interval poset of $\pi$ comprises the entire interval $[1,\dots,n]$ with minimal elements $\{1\},\dots,\{n\}$ as its only descendants. Hence, all simple permutations of a given order $n$ share the same interval poset (see for example Figure \ref{fig:IntervalPoset3142}).

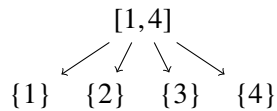
\begin{figure}[!ht]
     \centering

 \begin{tikzpicture}
       \tikzstyle{every node} = [rectangle]
      
         \node (14) at (0,0) {$[1,4]$};
            \node (1) at (-1.5,-1) {$\{1\}$};
            \node (2) at (-0.5,-1) {$\{2\}$};
            \node (3) at (0.5,-1) {$\{3\}$};
            \node (4) at (1.5,-1) {$\{4\}$};

        \foreach \from/\to in {14/1,14/2,14/3,14/4}
            \draw[->] (\from) -- (\to);
    \end{tikzpicture}
 
\caption{Interval poset of permutations 3142 and 2413.  }
     \label{fig:IntervalPoset3142}
 \end{figure} 

\section{Geometrical view of interval posets}
\subsection{General interval posets}

Bouvel, Cioni and Izart\cite{BCL}, provided a formula for the number of interval posets with $n$ minimal elements and added it to OEIS as sequence A348479 \cite{OEIS}.  


Here we provide a geometrical view to the interval posets by providing a bijection from the set of interval posets with $n$ minimal elements to a distinguished set of dissections of the convex $(n+1)-$ gon, which we define below.

We identify a polygon with its set of vertices and denote a diagonal or an outer edge of the polygon 
from vertex $i$ to vertex $j$ by $\{i,j\}$.

\begin{definition}\label{def df}
A dissection of an $(n+1)-$ gon will be called {\it diagonally framed} if for each two crossing diagonals, their vertices are connected to each other. Explicitly, if $\{a,b\}$ and $\{c,d\}$ are two crossing diagonals, 
then the diagonals or outer edges $\{a,d\},\{b,d\},\{c,b\},\{c,a\}$ must also exist.   See Figure \ref{framed} for an example. 
 \end{definition}
Before we proceed, we have to present two observations  
which provide some details on the structure of interval posets and will be used in the sequel. 

\begin{obs}\label{structure of interval posets 1}
Let $\pi \in \mathcal{S}_n$. If $I$ and $J$ are intervals of $\pi$ such that $I \nsubseteq  J$  and $J \nsubseteq I$ and $I\cap J\neq \emptyset$, then $I\cap J$, $I\cup J$, $I-J$ and $J-I$ are intervals of $\pi$.  
\end{obs}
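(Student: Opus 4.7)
The plan is to exploit the two equivalent descriptions of an interval of $\pi$: as a contiguous block of positions $[i_1,i_2]$ and as a contiguous range of values $[v_1,v_2]$, placed in bijection by $\pi$. Since $\pi$ is a permutation, containment and nonempty intersection can be read off either description: in particular, the hypothesis $I\cap J\neq\emptyset$ is equivalent to $[v_1,v_2]\cap[u_1,u_2]\neq\emptyset$, and neither $I$ nor $J$ containing the other forces the same condition on the value ranges. Up to swapping $I$ and $J$, I may therefore assume the position overlap pattern $i_1 < j_1 \le i_2 < j_2$.

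With this setup, the four sets $I\cap J$, $I\cup J$, $I\setminus J$, $J\setminus I$ occupy the position blocks $[j_1,i_2]$, $[i_1,j_2]$, $[i_1,j_1-1]$, $[i_2+1,j_2]$, so the ``contiguous positions'' half of being an interval is automatic. The remaining task is to verify that each of the corresponding sets of values is itself contiguous.

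For $I\cup J$, since the two value ranges share at least one value, their union is $[\min(v_1,u_1),\max(v_2,u_2)]$, already contiguous. For $I\cap J$, I would compare cardinalities: $|[v_1,v_2]\cup[u_1,u_2]|$ equals the size of the position block $[i_1,j_2]$ of $I\cup J$, and inclusion--exclusion then pins down $|[v_1,v_2]\cap[u_1,u_2]| = i_2-j_1+1$. But the values at the positions $[j_1,i_2]$ form a subset of this intersection of the same cardinality, so they fill it exactly, yielding the contiguous range $[\max(v_1,u_1),\min(v_2,u_2)]$. For $I\setminus J$ and $J\setminus I$, the non-containment on the value side leaves only the two overlap patterns $v_1 \le u_1 \le v_2 \le u_2$ and $u_1 \le v_1 \le u_2 \le v_2$; in either case the complementary value sets are obviously contiguous integer intervals.

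I expect the main obstacle to be carefully justifying the symmetry between the position and value descriptions -- specifically, that $I\subseteq J$ as a set of positions is equivalent to $[v_1,v_2]\subseteq[u_1,u_2]$ as value ranges, and likewise for intersection -- since without this bridge each of the four claims would require its own ad hoc analysis. Once that symmetry is established, the rest of the proof is a short cardinality count and a two-case check.
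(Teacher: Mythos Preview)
Your argument is correct. One small remark: the cardinality detour for $I\cap J$ is unnecessary, since the same bijection reasoning you use for $I\setminus J$ and $J\setminus I$ already gives that the values at the positions $[j_1,i_2]$ are exactly $[v_1,v_2]\cap[u_1,u_2]$; but what you wrote is fine as well. Also, the overlap patterns for the value ranges should carry strict inequalities at the endpoints (e.g.\ $v_1<u_1\le v_2<u_2$), since non-containment rules out equality there---this does not affect the conclusion.

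As for comparison: the paper does not prove this observation at all. It is stated as a fact and followed only by a worked example ($\pi=3124576$ with $I=[1,5]$, $J=[4,7]$), treating the statement as standard folklore about intervals of permutations. Your write-up therefore supplies a genuine proof where the paper gives none.
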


For example, take $\pi=3124576$, then $I=[1,5]$ and $J=[4,7]$ are intersecting intervals of $\pi$ and thus $I\cup J=[1,7],I\cap J=[4,5],I-J=[1,3],J-I=[6,7]$ are also intervals of $\pi$, as can be seen in Figure \ref{blocks} which depicts the permutation $\pi$ in the common graphical way. 

\begin{figure}
    \centering
    \includegraphics[scale=0.45]{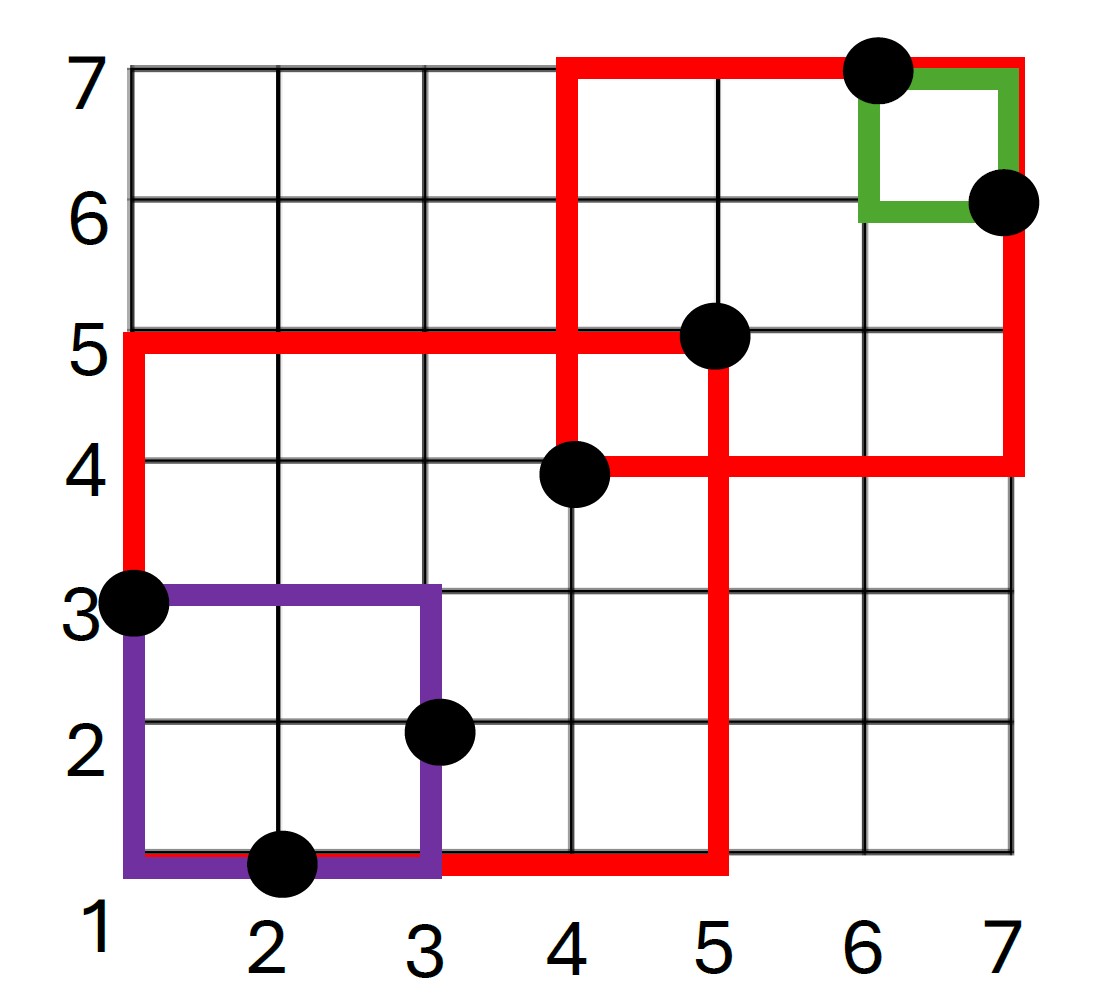}
    \caption{The permutation $\pi=3124576$ and its blocks in a graphical way.}
    \label{blocks}
\end{figure}

\begin{obs}\label{structure of interval posets 2}
If $P(\pi)$ is the interval poset of $\pi\in \mathcal{S}_n$, then no element of $P(\pi)$ has exactly $3$ direct descendants, 
since every permutation of order $3$ must contain a block of order $2$. 
\end{obs}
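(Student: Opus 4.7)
The plan is to argue by contradiction: assume an interval $I \in P(\pi)$ has exactly three direct descendants $J_1, J_2, J_3$, and derive a contradiction via Observation \ref{structure of interval posets 1} together with the stated fact that every permutation in $\mathcal{S}_3$ has a proper block of size $2$. As a preliminary step, I would verify that the $J_i$'s cover $I$: every singleton $\{x\} \subseteq I$ is itself an interval of $\pi$ sitting strictly below $I$ in $P(\pi)$, and so must be contained in some interval covered by $I$, i.e.\ in some $J_i$.

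Next, I would show that $J_1, J_2, J_3$ are pairwise disjoint. If, say, $J_1 \cap J_2 \neq \emptyset$ with neither contained in the other, then by Observation \ref{structure of interval posets 1} $J_1 \cup J_2$ is an interval, and the maximality of $J_1$ (and $J_2$) as direct descendants forces $J_1 \cup J_2 = I$. Writing the value ranges as $J_1 = [a, c]$ and $J_2 = [d, b]$ with $d \leq c$, I would then track the endpoints of $J_3$: since $J_3 \not\subseteq J_1$ and $J_3 \not\subseteq J_2$ by maximality, the endpoints of $J_3$ must straddle both blocks, and a short endpoint analysis shows that either $J_1 \cup J_3$ or $J_2 \cup J_3$ is an interval strictly between some $J_i$ and $I$, contradicting maximality again.

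Having reduced to the case of three pairwise disjoint descendants that cover $I$, they partition $I$ into three consecutive value-blocks $[a, c_1], [c_1+1, c_2], [c_2+1, b]$. Each block is also contiguous in positions in $\pi$, so the arrangement of the three blocks along $\pi$ encodes a permutation $\tau \in \mathcal{S}_3$. A direct check of all six elements of $\mathcal{S}_3$ confirms that every such $\tau$ has two consecutive values sitting at two adjacent positions. Translating back, two of the $J_i$'s are adjacent in both value and position, so their union is an interval of $\pi$ strictly containing one of them but still a proper sub-interval of $I$, contradicting its maximality as a direct descendant.

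The main obstacle I anticipate is the overlap case in the disjointness step, since direct descendants can in general overlap (for instance $[1,3]$ and $[2,4]$ are both direct descendants of $[1,4]$ in the interval poset of $\pi = 1234$). The point is that the presence of a \emph{third} maximal descendant forces incompatible endpoint constraints, resolved by two applications of Observation \ref{structure of interval posets 1}; once this step is settled, the core $\mathcal{S}_3$ case check follows exactly the stated hint.
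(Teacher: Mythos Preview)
Your argument is correct and follows the same core idea the paper indicates in the statement itself---reducing to the fact that every permutation of order~$3$ contains a block of order~$2$. The paper treats this as a one-line observation and gives no further justification, whereas you supply considerably more rigor, in particular carefully ruling out the possibility of overlapping direct descendants (which, as you note, can genuinely occur when there are only two) before invoking the $\mathcal{S}_3$ check.
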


\begin{figure}[!ht]

    \centering
    \includegraphics[scale=0.35]{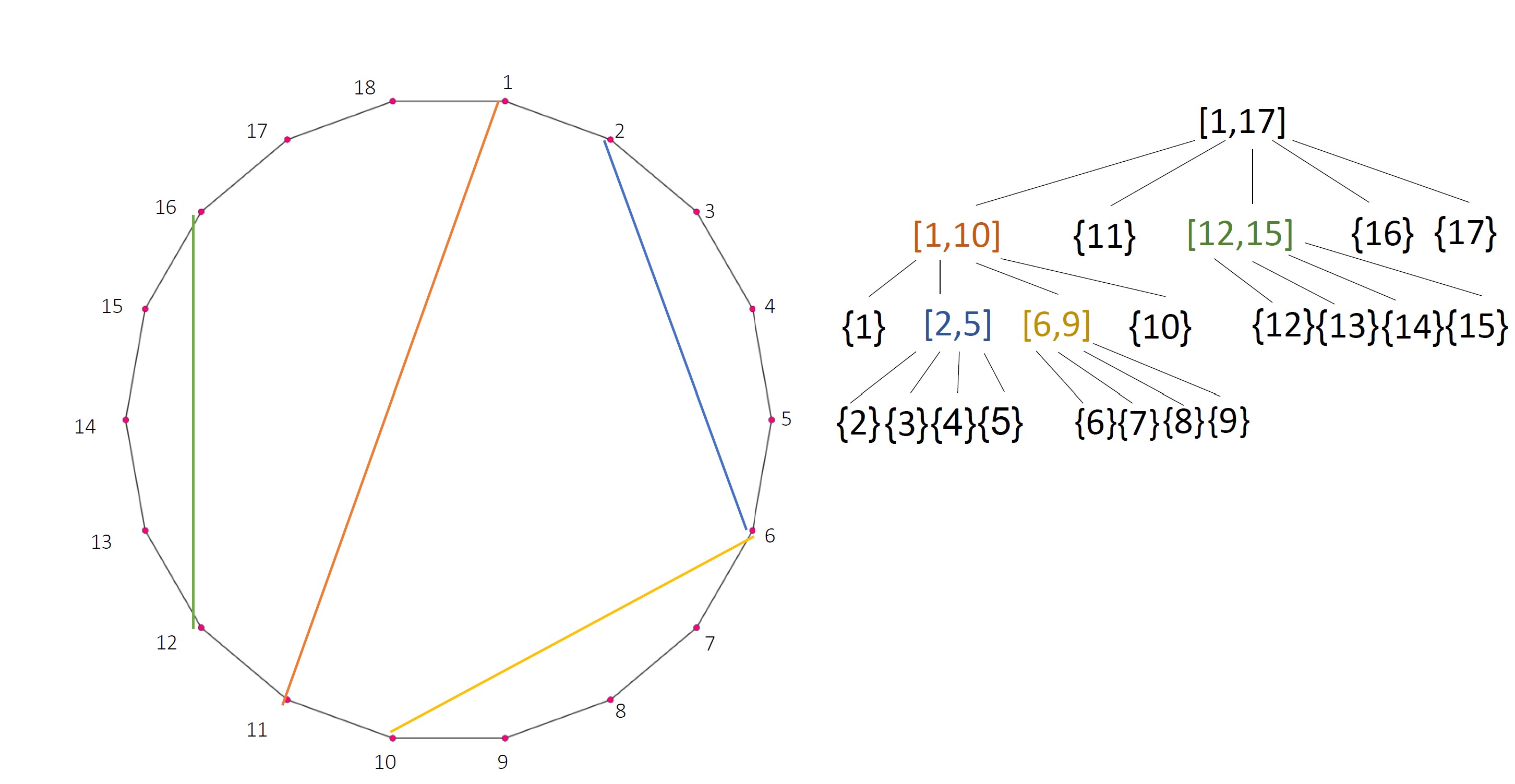}
    \caption{Right: the interval poset P. Left: the polygon $\Phi(P)$}
    \label{polygon to poset}
\end{figure}

We are ready now to present the main result of this subsection.

\begin{thm}
The number of interval posets with $n$ minimal elements is equal to the number of diagonally framed dissections of the convex ($n+1$)-gon such that no quadrilaterals are present (see Figure \ref{all} in the appendix for some examples of the bijection in small values of $n$).
\label{bijection1}
\end{thm}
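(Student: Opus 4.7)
The plan is to construct an explicit bijection $\Phi$ between the two sets. Label the vertices of the convex $(n+1)$-gon cyclically by $0, 1, \dots, n$, and identify the outer edge from vertex $i-1$ to vertex $i$ with the singleton interval $\{i\}$ for each $i \in \{1, \dots, n\}$, while the remaining outer edge $\{0, n\}$ represents the full interval $[1,n]$. Given an interval poset $P$ with $n$ minimal elements, define $\Phi(P)$ to be the dissection whose diagonals are precisely $\{a-1, b\}$ for every proper non-singleton interval $[a,b]$ of $P$. This assignment is manifestly injective, since $P$ and its inclusion order are completely recorded by the resulting collection of chords.

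Next I would verify that $\Phi(P)$ satisfies both defining conditions. For the diagonal framing property, suppose $\{a-1, b\}$ and $\{c-1, d\}$ are two crossing chords of $\Phi(P)$; geometrically this forces, up to relabeling, $a-1 < c-1 < b < d$, which is precisely the condition that the intervals $I = [a,b]$ and $J = [c,d]$ overlap without either containing the other. By Observation \ref{structure of interval posets 1}, the sets $I \cup J = [a,d]$, $I \cap J = [c,b]$, $I \setminus J = [a, c-1]$ and $J \setminus I = [b+1, d]$ are also intervals of the underlying permutation, hence elements of $P$, so the chords $\{a-1, d\}, \{c-1, b\}, \{a-1, c-1\}$ and $\{b, d\}$ all appear in $\Phi(P)$ and frame the crossing. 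For the no-quadrilateral condition, a quadrilateral face of $\Phi(P)$ would encode a node of $P$ with exactly three direct descendants, contradicting Observation \ref{structure of interval posets 2}.

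The main obstacle is the surjectivity of $\Phi$: I must show that every diagonally framed dissection $D$ of the $(n+1)$-gon without quadrilateral faces actually arises from some interval poset. The natural candidate for the inverse is the map $\Psi$ sending $D$ to the poset on $\{[a+1, b] : \{a,b\} \text{ is an outer edge or diagonal of } D\}$, ordered by inclusion. Under this map, the diagonal framing condition ensures that $\Psi(D)$ is closed under the four operations appearing in Observation \ref{structure of interval posets 1}, while the absence of quadrilaterals enforces the restriction from Observation \ref{structure of interval posets 2}. The delicate step is to show that $\Psi(D)$ is realized as $P(\pi)$ for some $\pi \in \mathcal{S}_n$; I would do this by invoking Tenner's characterization of interval posets from \cite{T} and checking that the two geometric hypotheses translate precisely into the structural conditions she identifies. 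Once $\Psi$ is known to land in the set of interval posets, $\Phi$ and $\Psi$ are mutually inverse by construction, and the bijection follows.
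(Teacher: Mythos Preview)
Your construction is essentially identical to the paper's: the map $\Phi$ you define (vertices labeled $0,\dots,n$ with $[a,b]\mapsto\{a-1,b\}$) is the paper's map $[a,b]\mapsto\{a,b+1\}$ after a shift of vertex labels, and your verifications of the diagonal-framing and no-quadrilateral conditions invoke exactly the same two observations in the same way. You in fact go further than the paper by explicitly addressing surjectivity via Tenner's characterization; the paper's proof simply asserts that $\Phi$ is a bijection and only checks that its image lands in the target set.
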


\begin{proof}
We define a bijection between the set of interval posets with $n$ minimal elements and the set of diagonally framed dissections of convex $(n+1)$- gons without quadrilaterals as follows:

Let $P$ be the interval poset of some $\pi\in \mathcal{S}_n$. We set $\Phi(P)$ to be the convex $(n+1)$-gon whose set of diagonals is $$\{\{a,b+1\}|[a,b] \text{ is an internal node of } P\},$$ i.e. to each interval of the form $[a,b]$ corresponds a diagonal $\{a,b+1\}$ in $\Phi(P)$; note that singletones intervals correspond to outer edges in the polygon  (see Figure \ref{polygon to poset} for an example). 

We claim now that $\Phi(P)$ must be a diagonally framed $(n+1)$ - gon. Indeed, if $\{a,c+1\}$ and $\{b,d+1\}$ are two crossing diagonals in $\Phi(P)$, where $a \leq b \leq c \leq d$, then $I=[a,c]$ and $J=[b,d]$ are intersecting intervals in $P$ and by Observation \ref{structure of interval posets 1} we have that $I\cup J=[a,d]$, $I\cap J=[b,c]$, $I-J=[a,b-1]$ and $J-I=[c+1,d]$ are intervals in $P$ corresponding respectively to the diagonals $\{a,d+1\}, \{b,c+1\},\{a,b\}$ and $\{c+1,d+1\}$. (See Figure \ref{framed} for an illustration). 

Moreover, $\Phi(P)$ must not contain any quadrilateral. Otherwise, if $a<b<c<d$ are such that $\{a.b,c,d\}$ is a quadrilateral (without any subdivision) then $P$ must contain the intervals $[a,b-1],[b,c-1],[c,d-1]$ and $[a,d-1]$ so we must have that the first three intervals are direct descendants of the fourth one and they are the only ones. By Observation \ref{structure of interval posets 2}, this is impossible. 
\end{proof}

\begin{figure}
    \centering
   \includegraphics[scale=0.15]{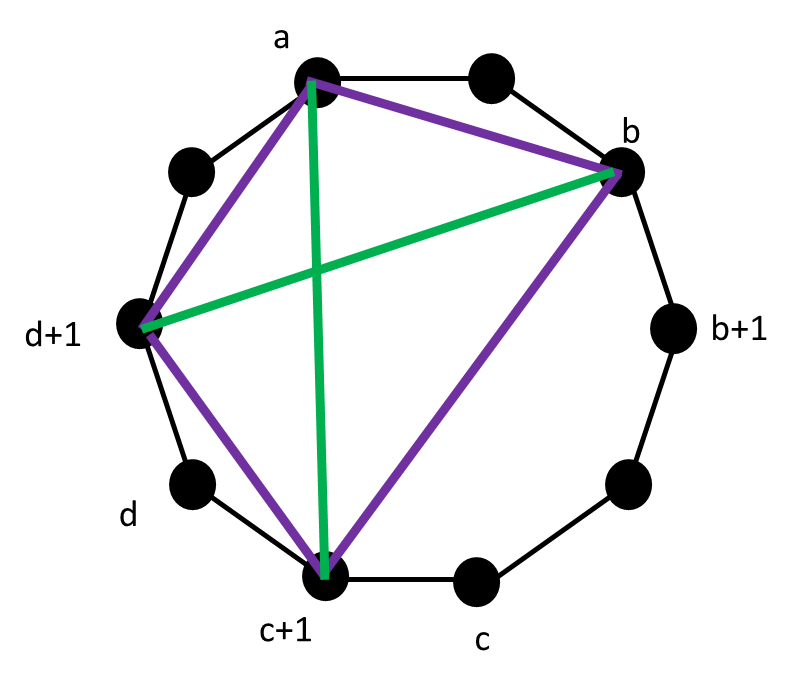}
    \caption{}
    \label{framed}
\end{figure}

\subsection{Tree interval posets}
A {\it tree poset} is a poset whose Hasse diagram is a tree.

In \cite{BCL}, the authors calculated the generating function of the number of tree interval posets using generating functions and mentioned that this is equal to the number of ways to place non-crossing diagonals in a convex $(n+2)$-gon such that no quadrilaterals are created (sequence A054515 from OEIS \cite{OEIS}). 


Using the function $\Phi$ defined above, one can easily produce a combinatorial proof of the following result.  

\begin{thm}\label{num}
The number of tree interval posets with $n$ minimal elements is
equal to the number of non crossing dissections of the convex $(n+1)-$ gon such that no quadrilaterals are present (see Figure \ref{trees} in the appendix for some examples
of the bijection).
\end{thm}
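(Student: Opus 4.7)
The plan is to reuse the bijection $\Phi$ constructed in the proof of Theorem \ref{bijection1}, which sends an interval poset with $n$ minimal elements to a diagonally framed quadrilateral-free dissection of the convex $(n+1)$-gon. A non-crossing dissection is vacuously diagonally framed, and Theorem \ref{bijection1} already delivers the ``no quadrilateral'' condition for every image of $\Phi$, so it suffices to establish the equivalence
\[
P \text{ is a tree interval poset} \iff \Phi(P) \text{ contains no pair of crossing diagonals.}
\]

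For the direction ``$\Phi(P)$ has a crossing $\Rightarrow$ $P$ is not a tree'', I would take crossing diagonals $\{a,c+1\}$ and $\{b,d+1\}$ of $\Phi(P)$ with $a<b\leq c<d$. Then $I=[a,c]$ and $J=[b,d]$ are incomparable intervals of $P$ and, by Observation \ref{structure of interval posets 1}, their intersection $K=[b,c]$ is also an interval of $P$. Since $K$ lies strictly below both incomparable elements $I$ and $J$, the set of elements of $P$ strictly above $K$ fails to be a chain, so the path from $K$ upward in the Hasse diagram is not unique and $P$ cannot be a tree.

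For the converse, I would invoke the standard fact that the Hasse diagram of a finite poset with a unique maximum is a tree if and only if every non-maximum element has a unique cover. Thus, if $P$ is not a tree, some interval $K\in P$ is covered by two distinct intervals $I$ and $J$. These covers must be incomparable (an inequality between them would place an element strictly between $K$ and the larger one, contradicting the covering relation), and both contain $K$, so $I\cap J\supseteq K$ is non-empty. Writing $I=[a,c]$ and $J=[b,d]$ with $a<b\leq c<d$, the corresponding diagonals $\{a,c+1\}$ and $\{b,d+1\}$ of $\Phi(P)$ cross, as desired.

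I do not anticipate any serious obstacle: the structural work has already been carried out in Theorem \ref{bijection1} and Observation \ref{structure of interval posets 1}, so the proof reduces to recognizing that a crossing in $\Phi(P)$ is precisely the polygon incarnation of two incomparable intersecting intervals in $P$, which in turn is the obstruction to the tree property. The only small care needed is the standard characterization of tree Hasse diagrams via uniqueness of covers for finite posets with a unique maximum.
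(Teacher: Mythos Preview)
Your proposal is correct and follows the same approach as the paper: reuse $\Phi$ and identify crossing diagonals in $\Phi(P)$ with pairs of incomparable intersecting intervals in $P$, which are precisely the obstruction to the Hasse diagram being a tree. The paper's own proof compresses both implications into the single sentence ``intersecting diagonals stem from intersecting intervals which can not exist in a tree since they cause a circle,'' so your explicit treatment of the two directions (via Observation~\ref{structure of interval posets 1} and the unique-cover characterization of trees) is simply a more complete write-up of the same argument.
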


\begin{proof}

We use the same mapping $\Phi$ which was applied in the proof of Theorem \ref{bijection1}. It is now sufficient to prove that no crossing diagonals are obtained. This is implied by the fact that intersecting diagonals stem from intersecting intervals which can not exist in a tree since they cause a circle.  (See Figure \ref{fig:IntervalPoset5123647}). 

\end{proof}
\subsection{Interval posets of block-wise simple permutations}
In \cite{BERS}, the current authors introduced the notion of block-wise simple permutations. We cite here the definition:

\begin{definition}\label{def by 2 blocks}
A permutation $\pi \in \mathcal{S}_n$ is called {\it block-wise simple} if  it has no interval of the form $p_1\oplus p_2$ or $p_1 \ominus p_2$,  where $\oplus$ and $\ominus$ stand for direct and skew sums of permutations respectively. 
\end{definition}

There are no block-wise simple permutations of orders $2$ and $3$.
For $n \in \{4,5,6\}$, a permutation is block-wise simple, if and only if it is simple. One of the first nontrivial examples of block-wise simple permutations is $4253716$.\\  


      
            
 %



In \cite{BERS}, the current authors enumerated the interval posets of block-wise permutations. 


The first few values of the sequence of these numbers  are $1,1,1,5,10,16,45,109,222,540$. This is sequence A054514 from OEIS \cite{OEIS} which also counts the number of ways to place non-crossing diagonals in a convex $(n+4)$-gon such that there are no triangles or quadrilaterals.

The geometrical interpretation of interval posets of block-wise permutations is as follows:

\begin{thm}
The number of interval posets that represent a block-wise simple permutation of order $n$ is equal to the number of ways to place non-crossing diagonals in a convex ($n+1$)-gon such that no triangles or quadrilaterals are present (see Figure \ref{blockwise} in the appendix for some examples
of the bijection).
\end{thm}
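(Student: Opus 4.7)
The plan is to apply the same bijection $\Phi$ used in the proof of Theorem \ref{num}, restricted to interval posets of block-wise simple permutations, and to show that its image is precisely the set of non-crossing dissections of the convex $(n+1)$-gon containing neither triangles nor quadrilaterals.

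First I would verify that every block-wise simple permutation $\pi$ has a tree interval poset. If two intervals $I,J$ of $\pi$ were to cross (that is, $I\cap J\neq\emptyset$ but neither contains the other), then Observation \ref{structure of interval posets 1} would give the decomposition $I=(I-J)\sqcup(I\cap J)$ of $I$ as a disjoint union of two nonempty consecutive sub-intervals, forcing $\pi\vert_I$ to be a direct or skew sum and contradicting block-wise simplicity. Hence $P(\pi)$ is a tree, and by Theorem \ref{num} the image $\Phi(P(\pi))$ is already a non-crossing dissection of the $(n+1)$-gon without quadrilaterals.

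The core of the argument is to match the block-wise simple condition with the absence of triangular faces. A triangular face with vertices $i<j<k$ has sides $\{i,j\}$, $\{j,k\}$, $\{i,k\}$, which under $\Phi^{-1}$ correspond to the intervals $[i,j-1]$, $[j,k-1]$, $[i,k-1]$ of $\pi$ (with singletons appearing whenever $j=i+1$ or $k=j+1$, so that the corresponding side is an outer edge). The first two of these partition the third into two consecutive blocks, so their joint presence witnesses exactly that $\pi\vert_{[i,k-1]}$ is a direct or skew sum; conversely, every direct or skew sum decomposition of an interval of $\pi$ produces three such intervals and hence a triangle in the polygon. Since the "no quadrilateral" condition is automatically forced by Observation \ref{structure of interval posets 2}, this identifies $\Phi$ as a bijection between interval posets of block-wise simple permutations of order $n$ and the required set of dissections.

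The main technical point I expect to require care is the case analysis for triangular faces with one or two sides that are outer edges rather than proper diagonals; such a triangle corresponds to a direct or skew sum decomposition in which one or both summands is a single entry, and this degenerate case must be treated uniformly with the generic one so that every triangle in the polygon genuinely encodes a forbidden interval. Once this outer-edge bookkeeping is made explicit, the converse direction follows immediately: any non-crossing dissection $D$ of the $(n+1)$-gon without triangles or quadrilaterals corresponds under Theorem \ref{num} to a tree interval poset $\Phi^{-1}(D)=P(\pi)$, and the triangle analysis shows that $\pi$ must be block-wise simple, completing the proof as a direct restriction of Theorem \ref{num}.
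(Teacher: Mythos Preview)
Your proposal is correct and follows essentially the same approach as the paper: apply $\Phi$, observe that block-wise simple permutations yield tree interval posets (hence non-crossing dissections without quadrilaterals by Theorem~\ref{num}), and then identify the block-wise simple condition with the absence of triangular faces via the correspondence $\{i,j\},\{j,k\},\{i,k\}\leftrightarrow [i,j-1],[j,k-1],[i,k-1]$.

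The only notable difference is in how the tree property is obtained. The paper invokes Tenner's characterization (tree interval poset $\iff$ no interval of the form $p_1\oplus p_2\oplus p_3$ or $p_1\ominus p_2\ominus p_3$), from which block-wise simplicity trivially implies the tree property. You instead give a direct argument from Observation~\ref{structure of interval posets 1}: crossing intervals $I,J$ would split $I$ into the two consecutive sub-intervals $I\setminus J$ and $I\cap J$, exhibiting a forbidden direct or skew sum. Your route is more self-contained; the paper's is shorter given the external reference. You are also more explicit than the paper about the converse direction and about the outer-edge degeneracies, both of which the paper leaves implicit.
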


\begin{proof}
We use again the mapping $\Phi$, defined earlier. 
In \cite{T} (Theorem 6.1), the author claimed that $P(\sigma)$ is a tree interval poset if and only if $\sigma$ contains no interval of the form $p_1\oplus p_2\oplus p_3$ or $p_1\ominus p_2 \ominus p_3$. 
From here, and by Definition \ref{def by 2 blocks}, it is obvious that an interval poset of a block-wise simple permutation is a tree. Hence it is sufficient to prove that for an interval poset $P$ of a block-wise permutation, $\Phi(P)$ has no triangles. This holds due to the fact that if $\Phi(P)$ contains a triangle with edges $\{a,b\},\{b,c\}, \{a,c\}$ with $a<b<c$ then $P$ must contain the intervals $[a,b-1],[b,c-1]$ and $[a,c-1]$ and thus $[a,c-1]$ is the direct parent of $[a,b-1]$ and $[b,c-1]$ which contradicts the definition of block-wise simple permutations. 
\end{proof}

\bibliographystyle{eptcs}
\bibliography{bagno-bib}

\newpage

\section{Appendix}

\begin{figure}[!ht]
    \centering
   \includegraphics[scale=0.19]{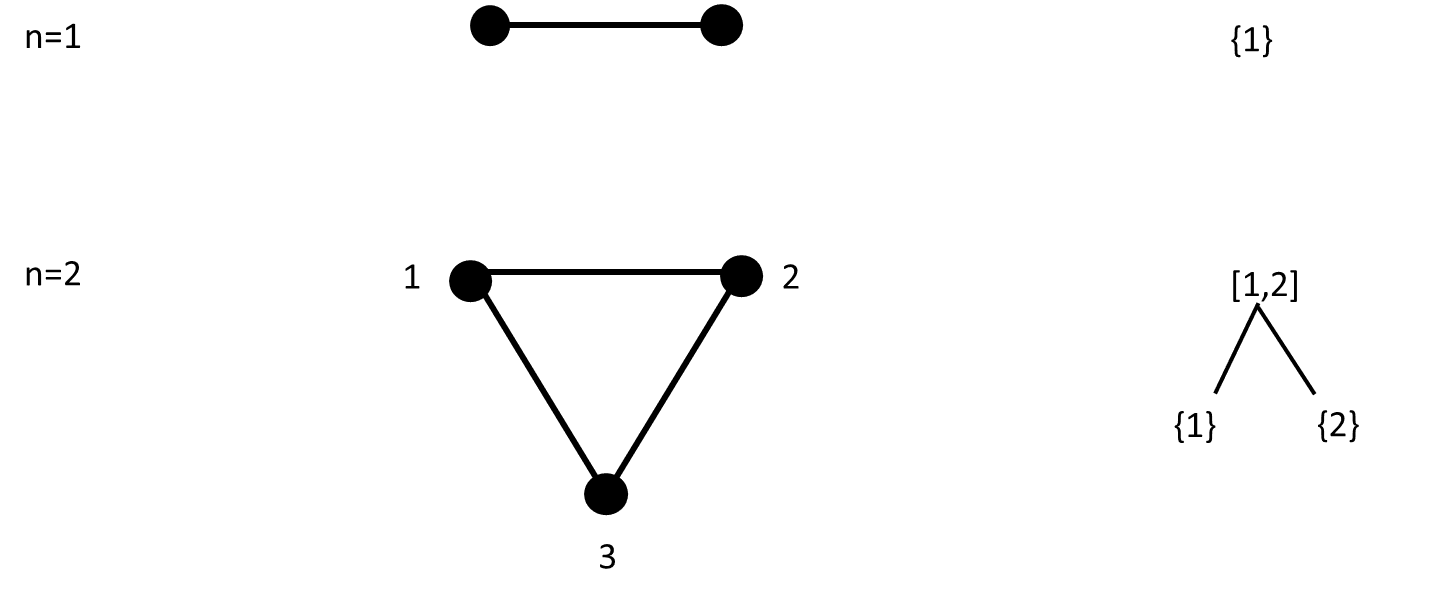}
   \includegraphics[scale=0.19]{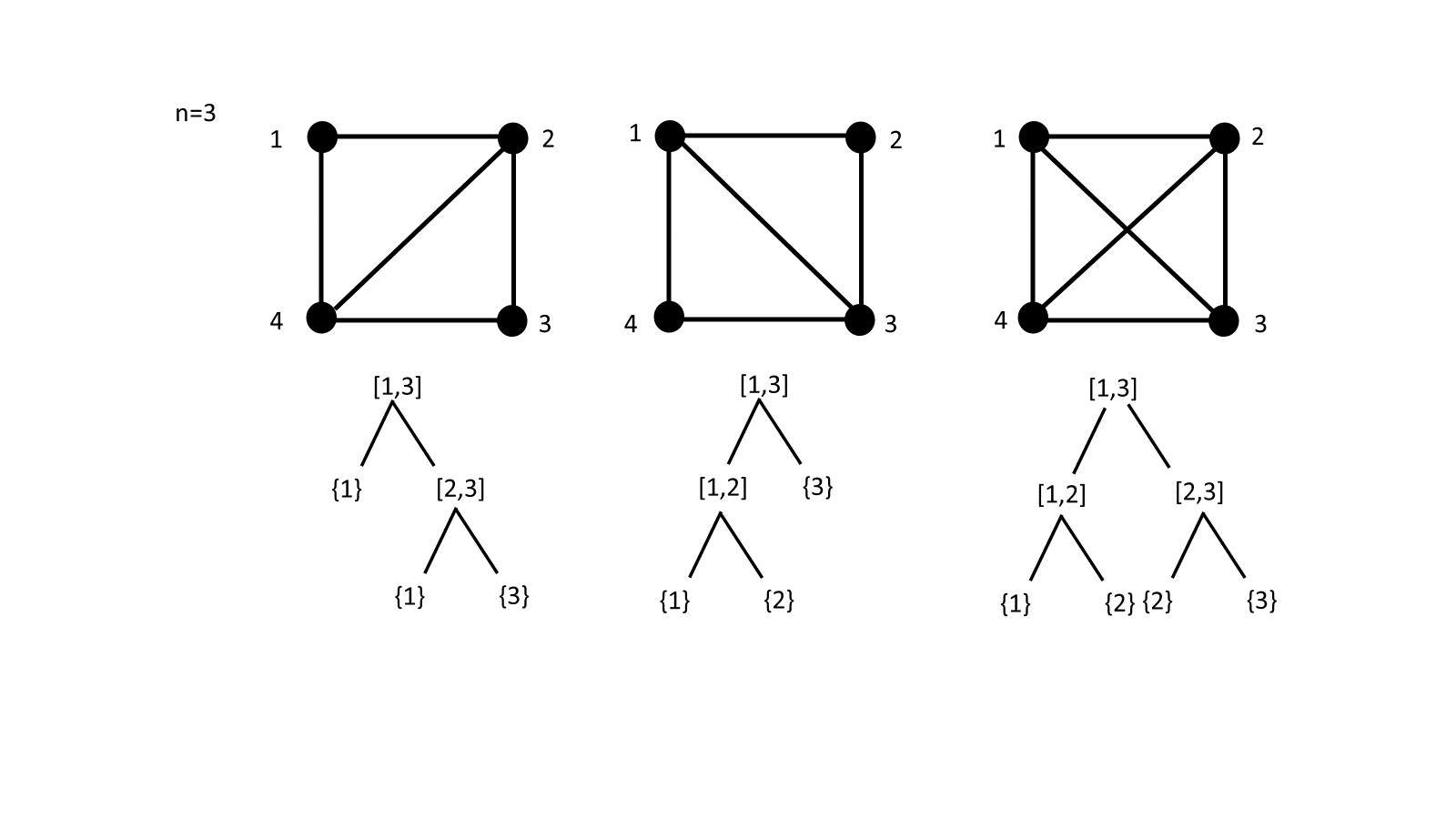}
   
    \caption{The bijection for small values of $n$}
    \label{all}
\end{figure}

\begin{figure}
    \centering
   \includegraphics[scale=0.19]{all1.png}
   \includegraphics[scale=0.19]{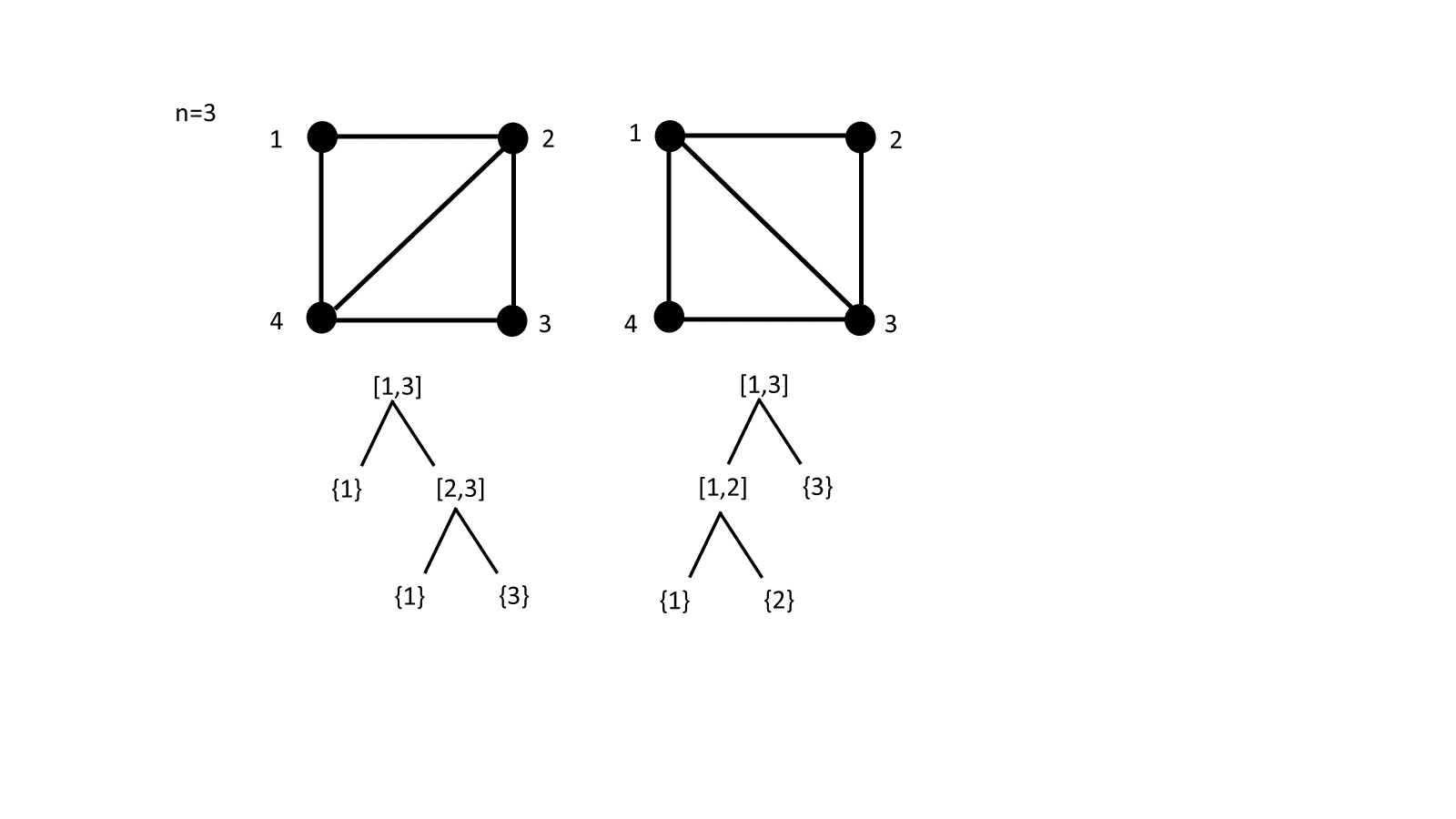}
   \includegraphics[scale=0.19]{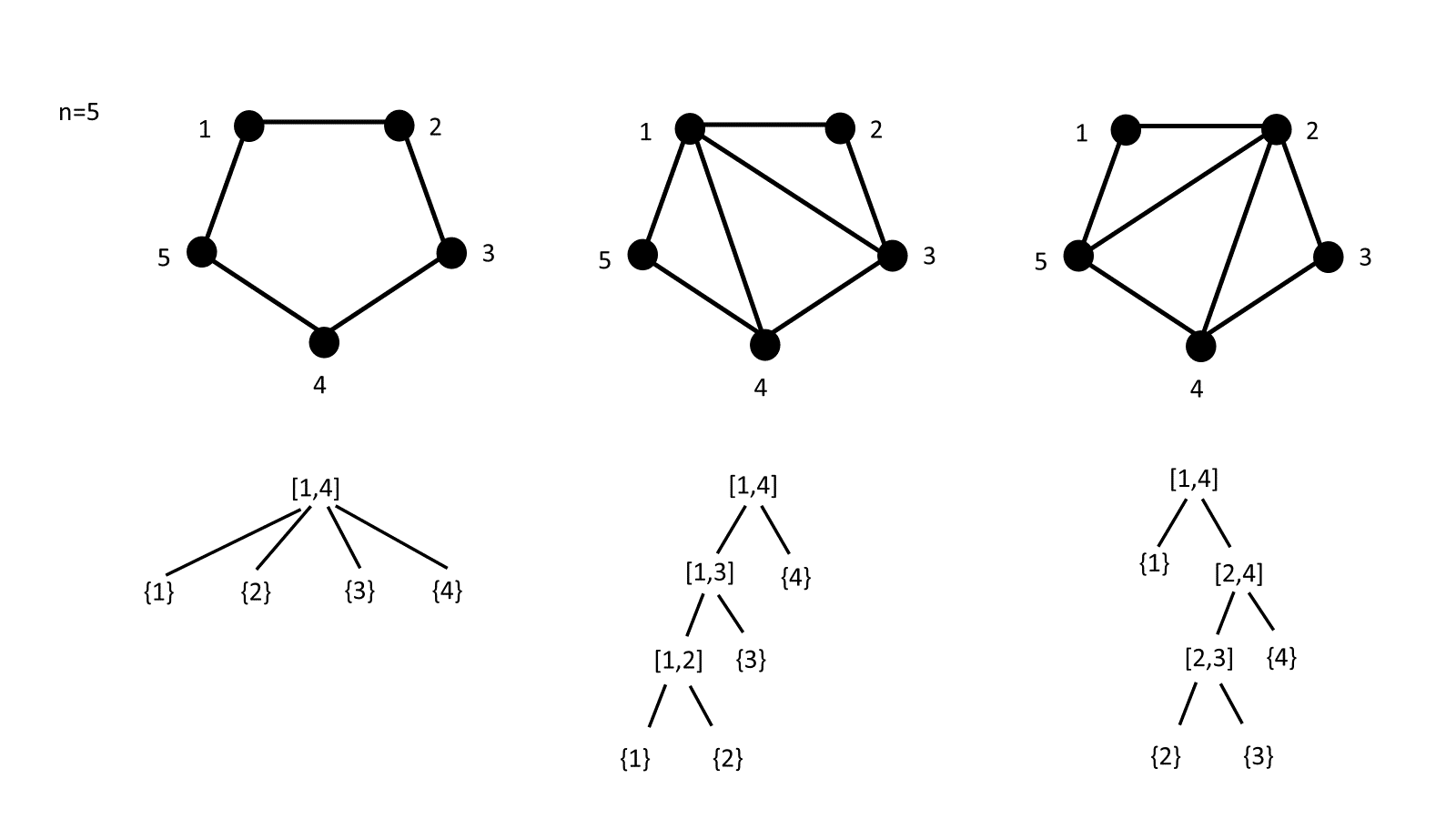}
   \includegraphics[scale=0.19]{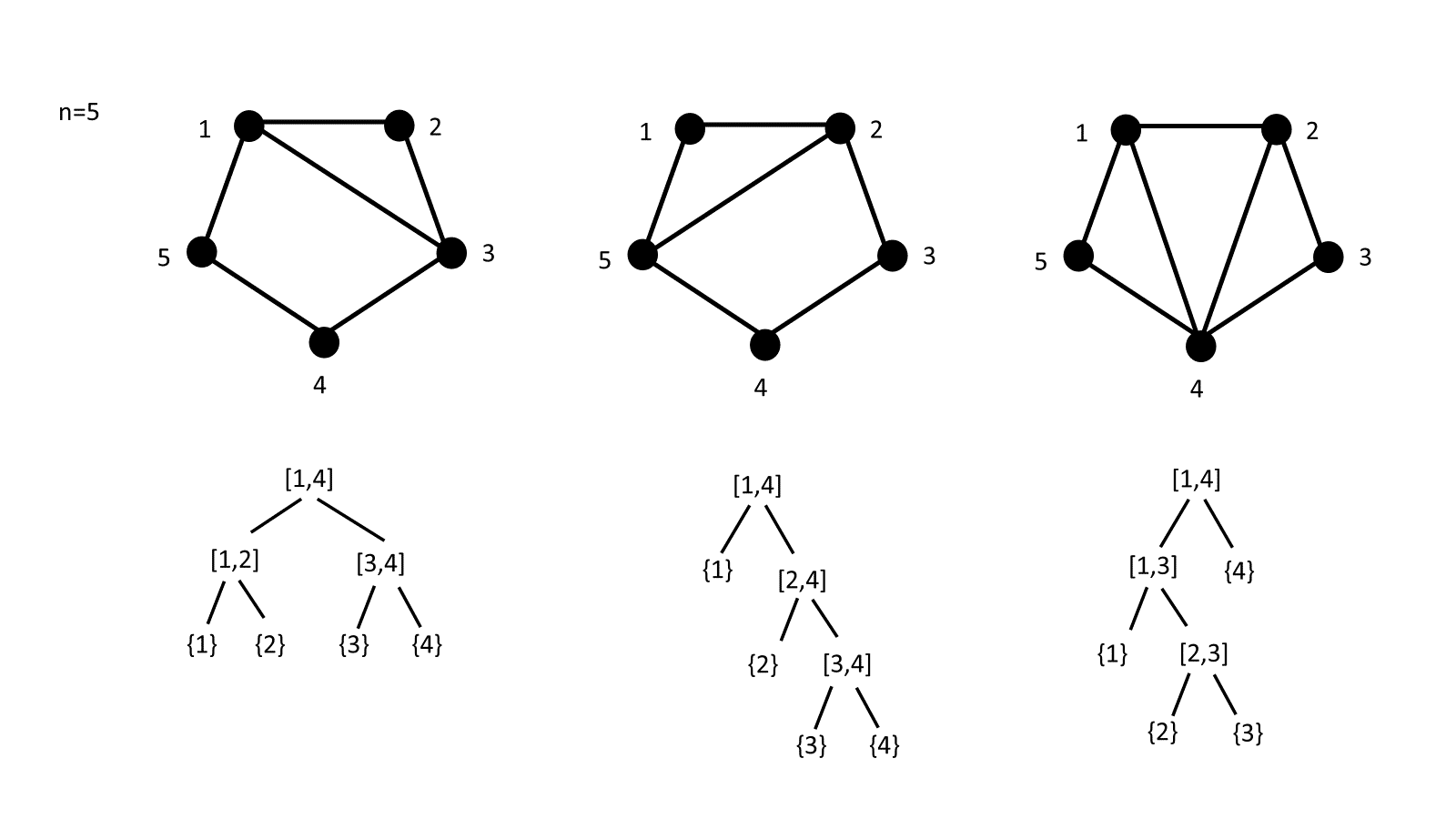}
   \caption{Examples for the bijection of tree intervals for small values of $n$}
    \label{trees}
\end{figure}

\begin{figure}
    \centering
   \includegraphics[scale=0.19]{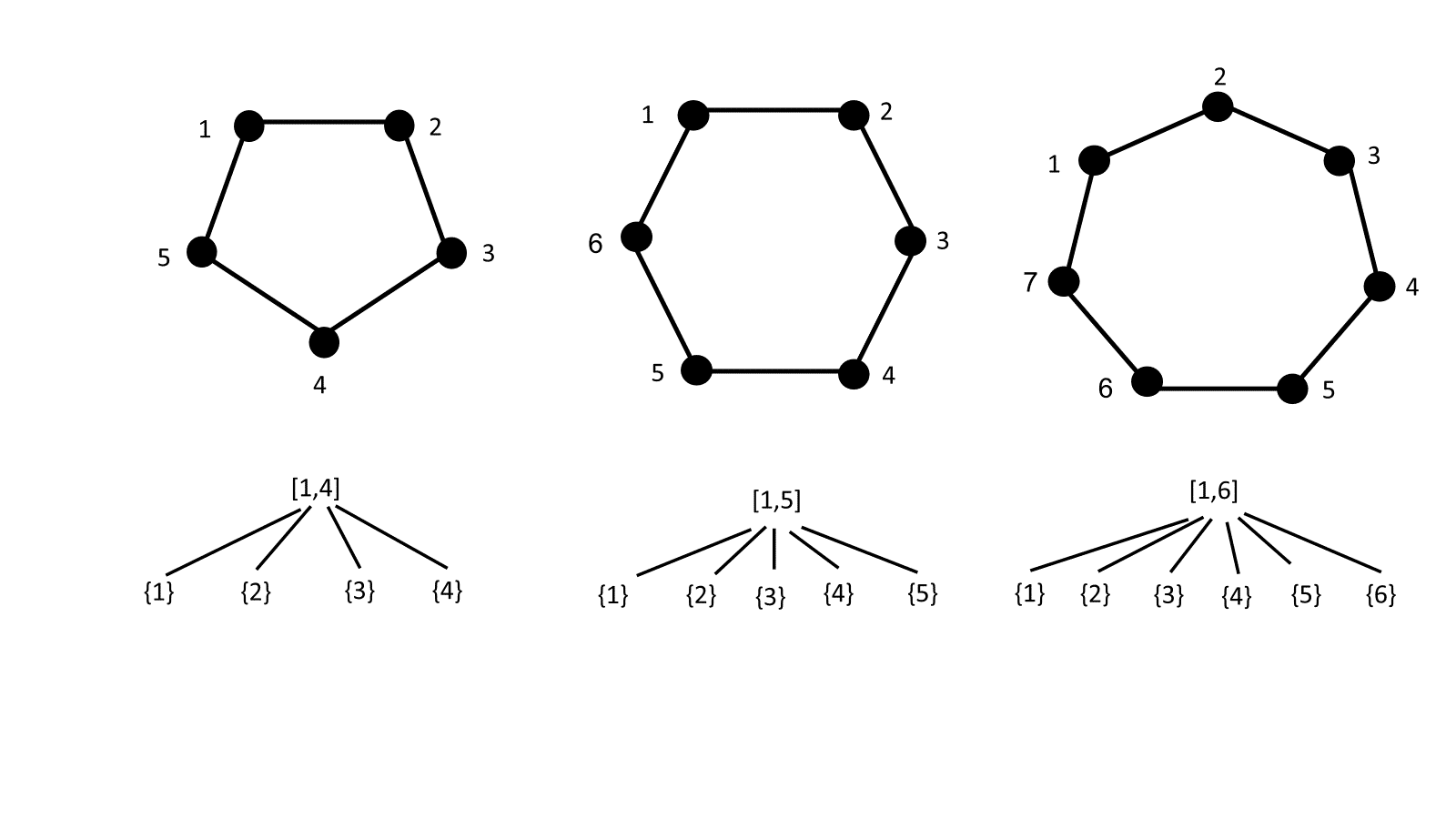}
   \includegraphics[scale=0.19]{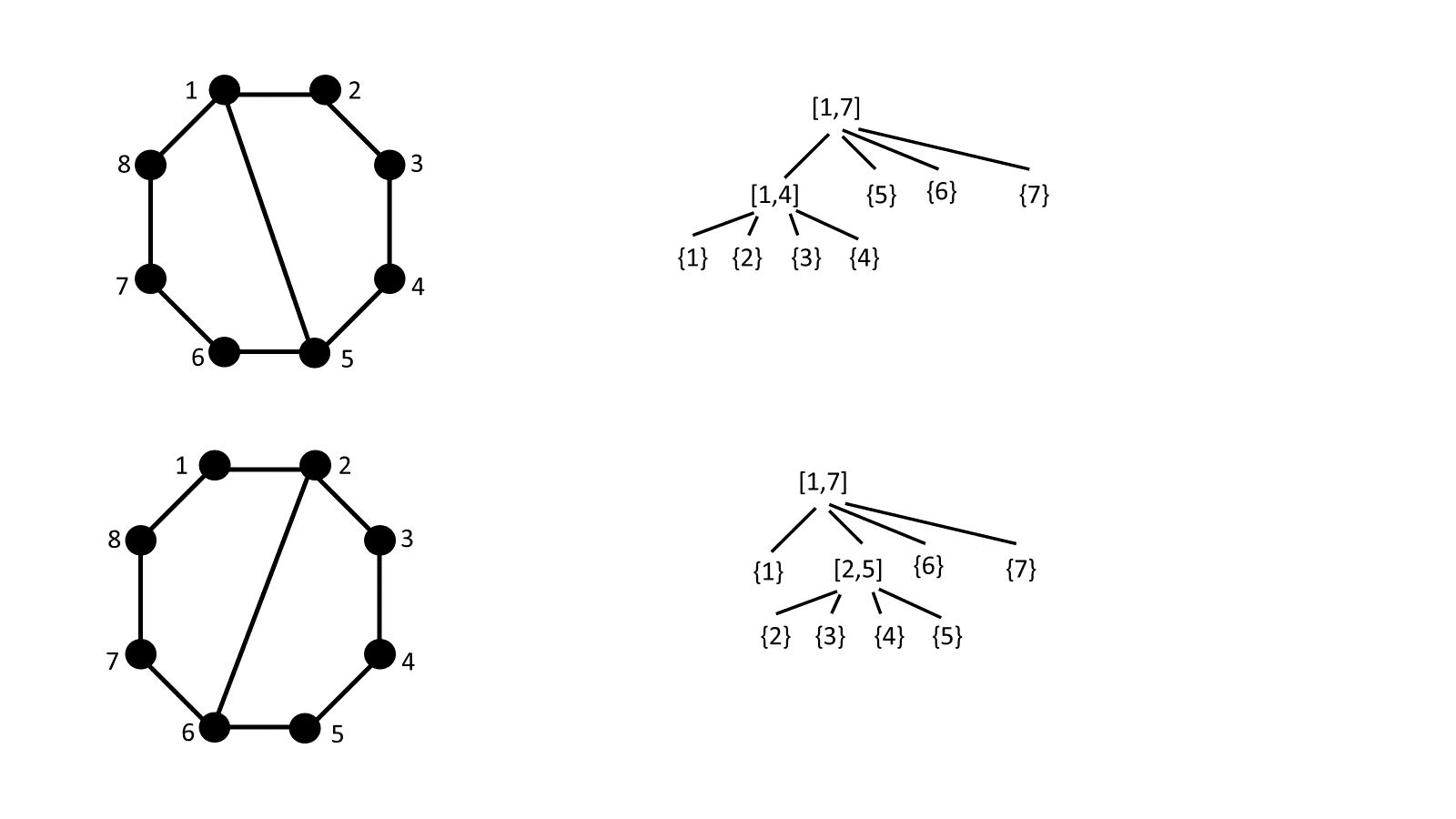}
   \includegraphics[scale=0.19]{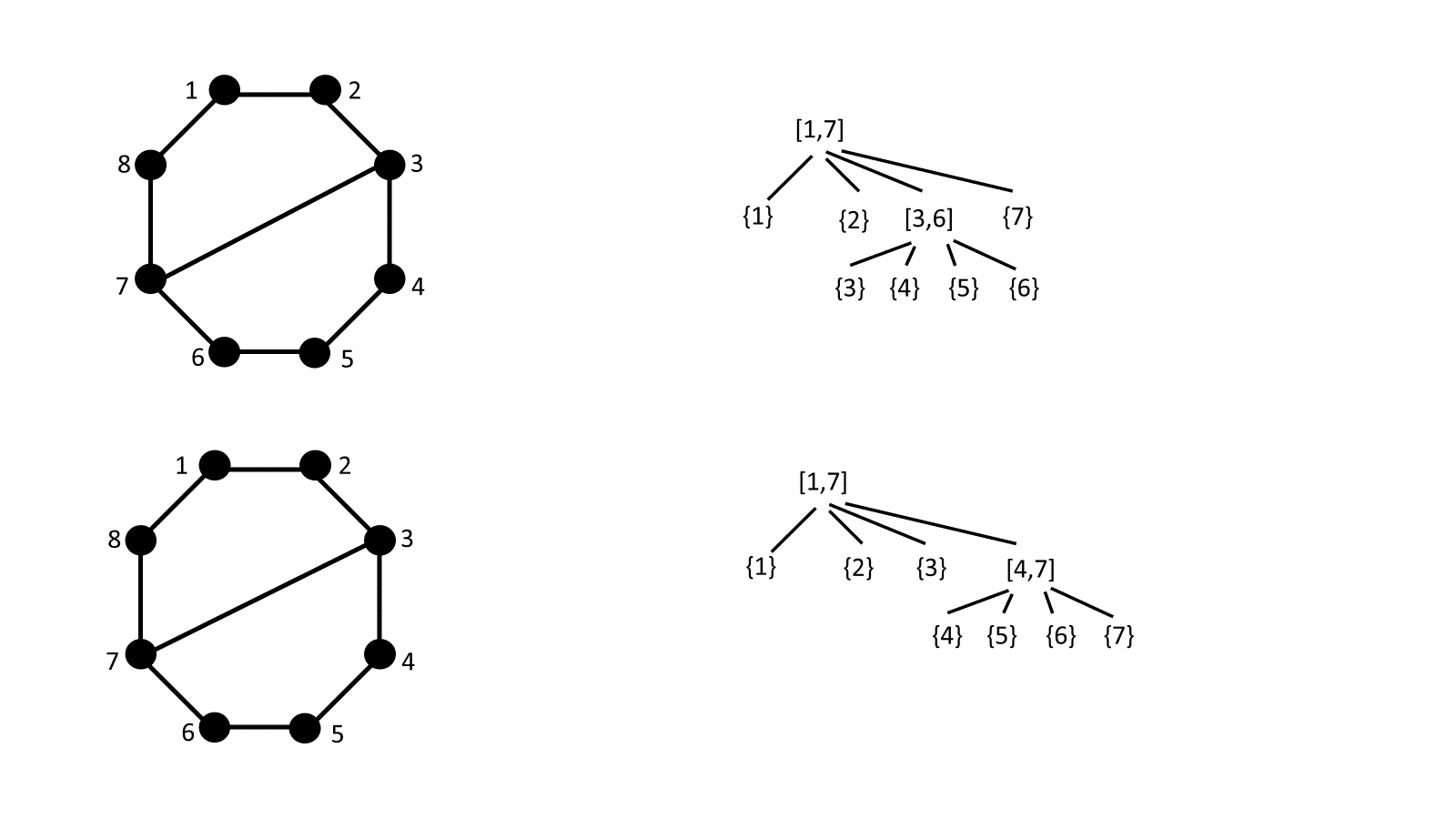}
          \caption{Examples for the bijection of block-wise simple intervals for small values of $n$}
    \label{blockwise}
\end{figure}

\end{document}